\documentclass[letter,twocolumn]{jpsj3}

\usepackage{amsmath,amssymb,amsthm,bm}
\usepackage{txfonts}
\usepackage{graphicx}

\newtheorem{theorem}{Theorem}
\newtheorem{corollary}[theorem]{Corollary}
\newtheorem{proposition}[theorem]{Proposition}

\newcommand{\ii}{\mathrm{i}}
\newcommand{\ee}{\mathrm{e}}
\newcommand{\E}{\mathbb{E}}

\newcommand{\supp}{\operatorname{supp}}
\newcommand{\dist}{\operatorname{dist}}
\newcommand{\sinc}{\operatorname{sinc}}
\newcommand{\SPSA}{\mathcal{S}_{c}}

\title{Simultaneous Perturbation as a Spectral Filter for Parametrized Quantum Circuits}
\author{Masayuki \textsc{Ohzeki}$^{1,2,3,4}$\thanks{E-mail address: mohzeki@tohoku.ac.jp}}
\inst{%
$^1$Graduate School of Information Sciences, Tohoku University, Sendai 980-8579, Japan\\
$^2$Department of Physics, Institute of Science Tokyo, Tokyo 152-8551, Japan\\
$^3$Research and Education Institute for Semiconductors and Informatics, Kumamoto University, Kumamoto 860-8555, Japan\\
$^4$Sigma-i Co., Ltd., Tokyo 108-0075, Japan
}
\recdate{\today}

\abst{Parameterized quantum circuits require repeated measurements to estimate objective gradients, making the number of circuit evaluations a central resource. 
Simultaneous perturbation stochastic approximation (SPSA) is widely used because it obtains a full gradient estimate from only two cost evaluations, independently of the number of parameters. 
We analyze the random perturbation for the structured Fourier landscapes generated by quantum circuits. 
For a finite update width, we derive the exact mode-by-mode response of the mean Rademacher estimator and show that it selectively suppresses modes involving many parameters.
For general generators, the filtered mean field need not be conservative. 
We derive its sampling variance, compare it with Gaussian random-direction smoothing, and formulate the resulting stochastic dynamics through a Fokker--Planck equation. 
The analysis shows that SPSA is a geometry-dependent spectral filter: decreasing $c$ provides a continuation from a smoothed landscape toward the original objective, while the associated diffusion controls exploration. }

\begin{document}

\maketitle

\noindent{\it Introduction.---}
Parametrized quantum circuits (PQCs) encode variational states and predictions in expectation values controlled by classical parameters.
This construction underlies the variational quantum eigensolver, the quantum approximate optimization algorithm, and quantum circuit learning.
On present devices, the objective is obtained from repeated measurements and is therefore both expensive and noisy.
Estimating one objective value already requires many shots, and a conventional gradient step repeats this measurement cost for each parameter direction. Gradient evaluation can therefore dominate the runtime and sampling budget, even for shallow circuits \cite{Kandala2017,Lavrijsen2020,BonetMonroig2023}.

Simultaneous perturbation stochastic approximation (SPSA) avoids this scaling. It chooses a random direction $\bm\Delta$ with independent entries $\Delta_j=\pm1$ of equal probability; this symmetric Bernoulli law is the Rademacher distribution. For a width $c>0$, the two evaluations $F(\bm\theta\pm c\bm\Delta)$ yield a full gradient estimate, so each update uses only two circuit evaluations, independent of parameter number \cite{Spall1992,Spall1998}. This economy has motivated applications in quantum tomography, variational optimization, and quantum geometric tensors \cite{Ferrie2014,Gacon2021,Periyasamy2024}. The randomness is usually discussed statistically; here we ask what transformation it induces on the circuit objective.

Quantum-circuit objectives possess additional structure not resolved by a local Taylor estimate of the SPSA bias.
Expectation values generated by unitary parameter gates form finite multivariate Fourier series, whose frequencies are fixed by differences of generator eigenvalues \cite{Schuld2021,Wierichs2022,Nemkov2023}.
This representation has been used to characterize expressivity, derive parameter-shift rules, recover cost landscapes, and relate spectral power to barren plateaus \cite{Stechly2023,OkumuraOhzeki2026}.
It is therefore natural to ask how a stochastic gradient estimator acts on the Fourier components themselves.

This also gives a dynamical interpretation. The expected update smooths the Fourier landscape, while finite-sample fluctuations act as diffusion and lead, in a diffusion approximation, to a Fokker--Planck equation. In the reversible case its stationary density is set by the smoothed objective and an effective temperature determined by the learning rate and estimator variance. Decreasing the perturbation width and this temperature is therefore an annealing protocol: sufficiently slow cooling can concentrate on global minima, whereas practical finite schedules may lose diffusion and become trapped at local minima, although standard stochastic-gradient results establish only stationary-point convergence. We derive the exact spectral action, sampling variance, and Fokker--Planck description, clarifying when SPSA regularizes a rugged landscape and when it removes useful descent information.

\noindent{\it Fourier structure.---}
Consider a depth-$d$ circuit containing $L$ independently parametrized gates, with one parameter per gate for notational simplicity,
\begin{equation}
U(\bm\theta)=W_{L+1}\prod_{j=L}^{1}\ee^{-\ii\theta_jG_j}W_j,
\label{eq:circuit}
\end{equation}
where the $W_j$ are fixed unitaries and the $G_j$ are finite-dimensional Hermitian generators.
For an input state $\varrho$ and an observable $O$, the objective is
\begin{equation}
F(\bm\theta)=\operatorname{Tr}\!\left[O U(\bm\theta)\varrho U(\bm\theta)^\dagger\right].
\label{eq:objective}
\end{equation}
Let $\Lambda_j$ be the spectrum of $G_j$ and let
\begin{equation}
\Omega_j=\{\lambda-\lambda':\lambda,\lambda'\in\Lambda_j\}
\label{eq:frequency-set}
\end{equation}
be its difference spectrum.
Let $\Pi_{j,\lambda}$ denote the spectral projector of $G_j$ onto the eigenspace with eigenvalue $\lambda$. Inserting the spectral resolution $\ee^{-\ii\theta_jG_j}=\sum_{\lambda\in\Lambda_j}\ee^{-\ii\theta_j\lambda}\Pi_{j,\lambda}$ on both sides of $\varrho$ gives
\begin{equation}
F(\bm\theta)=
\sum_{\bm\omega\in\Omega_1\times\cdots\times\Omega_L}
\widehat F_{\bm\omega}\ee^{\ii\bm\omega\cdot\bm\theta}.
\label{eq:fourier-expansion}
\end{equation}
Hermiticity implies $\widehat F_{-\bm\omega}=\widehat F_{\bm\omega}^*$.
Equation~\eqref{eq:fourier-expansion} is a finite generalized Fourier series; for commensurate spectra it is an ordinary Fourier series on a torus \cite{Schuld2021,Wierichs2022}.

For a Pauli rotation $G_j=P_j/2$, one has $\Omega_j=\{-1,0,1\}$.
Although every coordinate therefore has degree at most one, the support
\begin{equation}
|\supp\bm\omega|=|\{j:\omega_j\ne0\}|
\label{eq:support}
\end{equation}
can grow with circuit depth and parameter count.
We refer to $|\supp\bm\omega|$ as the Fourier order or support size.

\noindent{\it Exact SPSA multiplier.---}
Let $\bm\Delta=(\Delta_1,\ldots,\Delta_L)$ have independent Rademacher entries, meaning $\Delta_j\in\{+1,-1\}$ with $\Pr(\Delta_j=1)=\Pr(\Delta_j=-1)=1/2$. Let $c>0$ denote the perturbation width.
The two-sided SPSA estimator is
\begin{equation}
\widehat g_i(\bm\theta;\bm\Delta)
=\frac{F(\bm\theta+c\bm\Delta)-F(\bm\theta-c\bm\Delta)}
{2c\Delta_i}.
\label{eq:spsa-estimator}
\end{equation}
Because $\Delta_i^{-1}=\Delta_i$, averaging over independently sampled directions defines the mean, or population, operator
\begin{equation}
(\SPSA F)_i(\bm\theta)=\E_{\bm\Delta}\!\left[\widehat g_i(\bm\theta;\bm\Delta)\right].
\label{eq:population-operator}
\end{equation}

\begin{theorem}[Rademacher-SPSA Fourier multiplier]
For the Fourier mode $e_{\bm\omega}(\bm\theta)=
\ee^{\ii\bm\omega\cdot\bm\theta}$,
\begin{equation}
(\SPSA e_{\bm\omega})_i
=\ii\omega_i\rho_i(\bm\omega;c)e_{\bm\omega},
\label{eq:multiplier-action}
\end{equation}
where
\begin{equation}
\rho_i(\bm\omega;c)=
\sinc(c\omega_i)\prod_{j\ne i}\cos(c\omega_j)
\label{eq:rho}
\end{equation}
and $\sinc x=\sin x/x$, with $\sinc0=1$.
\end{theorem}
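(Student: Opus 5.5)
The plan is a direct computation: substitute the single mode $e_{\bm\omega}$ into the estimator \eqref{eq:spsa-estimator}, factor the numerator, and use independence of the Rademacher coordinates to turn the expectation into a product of one-dimensional averages. First, since $e_{\bm\omega}(\bm\theta\pm c\bm\Delta)=e_{\bm\omega}(\bm\theta)\,\ee^{\pm\ii c\,\bm\omega\cdot\bm\Delta}$, the numerator equals $2\ii\,e_{\bm\omega}(\bm\theta)\sin(c\,\bm\omega\cdot\bm\Delta)$. Together with $\Delta_i^{-1}=\Delta_i$ this gives
\begin{equation*}
(\SPSA e_{\bm\omega})_i=\frac{\ii}{c}\,e_{\bm\omega}(\bm\theta)\,\E_{\bm\Delta}\!\left[\Delta_i\sin(c\,\bm\omega\cdot\bm\Delta)\right],
\end{equation*}
so everything reduces to evaluating the scalar expectation $\E[\Delta_i\sin(c\,\bm\omega\cdot\bm\Delta)]$.

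Second, split off the $i$th coordinate. Write $\bm\omega\cdot\bm\Delta=\omega_i\Delta_i+S$ with $S=\sum_{j\ne i}\omega_j\Delta_j$, and expand $\sin(c\omega_i\Delta_i+cS)=\sin(c\omega_i\Delta_i)\cos(cS)+\cos(c\omega_i\Delta_i)\sin(cS)$. Because $\Delta_i^2=1$ and $\sin$ is odd, we have $\Delta_i\sin(c\omega_i\Delta_i)=\sin(c\omega_i)$. Because $\cos$ is even, $\Delta_i\cos(c\omega_i\Delta_i)=\Delta_i\cos(c\omega_i)$. Since $\Delta_i$ is independent of $S$ and has mean zero, the second term averages to zero. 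Hence
\begin{equation*}
\E[\Delta_i\sin(c\,\bm\omega\cdot\bm\Delta)]=\sin(c\omega_i)\,\E[\cos(cS)].
\end{equation*}

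Third, evaluate $\E[\cos(cS)]=\mathrm{Re}\,\E[\ee^{\ii cS}]=\mathrm{Re}\prod_{j\ne i}\E[\ee^{\ii c\omega_j\Delta_j}]$, again by independence. Each factor equals $\cos(c\omega_j)$, which is real, so $\E[\cos(cS)]=\prod_{j\ne i}\cos(c\omega_j)$. Finally, write $\sin(c\omega_i)/c=\omega_i\,\sinc(c\omega_i)$, which produces $\ii\omega_i\rho_i(\bm\omega;c)e_{\bm\omega}$ as in \eqref{eq:multiplier-action}. When $\omega_i=0$ both sides vanish, which is consistent with the convention $\sinc0=1$.

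No step is a real obstacle. The only points needing care are the parity bookkeeping in the second step and noting that the characteristic function of a Rademacher variable is real, so taking the real part is harmless. For frequencies $\omega_j$ that are not integers the argument is unchanged, since it never uses periodicity.
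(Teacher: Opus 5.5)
Your proof is correct and is essentially the paper's argument: substitute the mode, reduce to $\E[\Delta_i\sin(c\bm\omega\cdot\bm\Delta)]$, and factor the expectation using independence of the Rademacher entries. The only cosmetic difference is that you split off coordinate $i$ with the sine addition formula and parity, whereas the paper factors $\E[\Delta_i\ee^{\ii c\bm\omega\cdot\bm\Delta}]$ directly and then takes the imaginary part.
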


\begin{proof}
Substitution of $e_{\bm\omega}$ into Eq.~\eqref{eq:spsa-estimator} gives
\begin{equation}
\widehat g_i=
\frac{\ii}{c}e_{\bm\omega}(\bm\theta)
\Delta_i\sin(c\bm\omega\cdot\bm\Delta).
\end{equation}
Independence of the Rademacher variables yields
\begin{align}
\E_{\bm\Delta}\!\left[
\Delta_i\ee^{\ii c\bm\omega\cdot\bm\Delta}
\right]
&=\E_{\Delta_i}[\Delta_i\ee^{\ii c\omega_i\Delta_i}]
\prod_{j\ne i}\E_{\Delta_j}[\ee^{\ii c\omega_j\Delta_j}]\nonumber\\
&=\ii\sin(c\omega_i)\prod_{j\ne i}\cos(c\omega_j).
\end{align}
Taking the imaginary part and multiplying by $\ii e_{\bm\omega}/c$ proves Eqs.~\eqref{eq:multiplier-action} and \eqref{eq:rho}.
\end{proof}

The theorem is exact at finite $c$ and applies mode by mode to Eq.~\eqref{eq:fourier-expansion}.
It refines the usual statement $\E\widehat{\bm g}=\nabla F+O(c^2)$ by retaining the full spectrum of the bias.
In general this vector field is not conservative. For a mode with $\omega_i\omega_j\ne0$, its contributions to $\partial_j(\SPSA F)_i$ and $\partial_i(\SPSA F)_j$ are proportional to $\rho_i$ and $\rho_j$, respectively. They therefore agree only when the two coordinate-dependent multipliers happen to be equal. The random update consequently need not be the gradient of one common smoothed objective; its mean can instead define a genuinely nonconservative drift.

\noindent{\it Suppression by Fourier support.---}
To state the attenuation bound, first measure the distance of a phase from the multiples of $\pi$ by
\begin{equation}
d_\pi(x)=\dist(x,\pi\mathbb Z)=\min_{n\in\mathbb Z}|x-n\pi|.
\end{equation}
A nonzero coordinate of $\bm\omega$ does not by itself imply attenuation: $|\cos x|$ returns to one at $x\in\pi\mathbb Z$. Thus $d_\pi(c\omega_j)$ identifies whether the $j$th factor genuinely suppresses the mode.

\begin{corollary}[Nonresonant support bound]
Let $0<\delta\le\pi/2$ and suppose that a set
$S_i\subseteq\{1,\ldots,L\}\setminus\{i\}$ satisfies
\begin{equation}
d_\pi(c\omega_j)\ge\delta,
\qquad j\in S_i.
\end{equation}
Then
\begin{equation}
|\rho_i(\bm\omega;c)|
\le |\sinc(c\omega_i)|(\cos\delta)^{|S_i|}.
\label{eq:nonresonant-bound}
\end{equation}
\end{corollary}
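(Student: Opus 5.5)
The plan is to start from the explicit product formula for $\rho_i(\bm\omega;c)$ in Eq.~\eqref{eq:rho}, supplied by the Rademacher-SPSA multiplier theorem. Taking absolute values gives
\begin{equation}
|\rho_i(\bm\omega;c)|=|\sinc(c\omega_i)|\prod_{j\in S_i}|\cos(c\omega_j)|\prod_{j\notin S_i\cup\{i\}}|\cos(c\omega_j)|.
\end{equation}
The factors with $j\notin S_i\cup\{i\}$ are bounded trivially by $|\cos x|\le1$. It therefore remains to show that each factor with $j\in S_i$ satisfies $|\cos(c\omega_j)|\le\cos\delta$.

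For this pointwise estimate I will use the $\pi$-periodicity of $|\cos x|$, which gives $|\cos x|=|\cos(x-n\pi)|$ for every $n\in\mathbb Z$. Choosing $n$ to attain the minimum in the definition of $d_\pi$ yields $|\cos x|=\cos d_\pi(x)$. Here I use that $d_\pi(x)\in[0,\pi/2]$, where cosine is nonnegative, and that cosine is even, which disposes of the sign of $x-n\pi$.

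Cosine is decreasing on $[0,\pi/2]$. The hypothesis $\delta\le\pi/2$ keeps both $\delta$ and $d_\pi(c\omega_j)$ in this interval, so $d_\pi(c\omega_j)\ge\delta$ implies $\cos d_\pi(c\omega_j)\le\cos\delta$. Multiplying the $|S_i|$ such bounds, all of whose factors are nonnegative, together with the trivial bounds gives Eq.~\eqref{eq:nonresonant-bound}.

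There is no real obstacle. The only point needing care is the identity $|\cos x|=\cos d_\pi(x)$: the minimizing $n$ must be chosen so that $|x-n\pi|\le\pi/2$, which is always possible because consecutive multiples of $\pi$ are spaced $\pi$ apart. The hypothesis $\delta\le\pi/2$ is exactly what makes the monotonicity step valid.
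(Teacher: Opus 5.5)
Your proposal is correct and follows the paper's argument: the paper likewise bounds the non-selected factors by one and uses $|\cos x|=\cos d_\pi(x)$ with $d_\pi(x)\in[0,\pi/2]$ to cap each selected factor by $\cos\delta$. Your write-up simply makes explicit the periodicity, evenness and monotonicity steps that the paper states in one line.
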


Since $d_\pi(x)\in[0,\pi/2]$ and $|\cos x|=\cos d_\pi(x)$, each selected factor is at most $\cos\delta$. Equation~\eqref{eq:nonresonant-bound} therefore decreases as $(\cos\delta)^{|S_i|}$: every additional nonresonant parameter multiplies the response by a number smaller than one, producing exponential suppression in their number. In applying the bound, one should check $d_\pi(c\omega_j)$ rather than only $|c\omega_j|$; phases close to a multiple of $\pi$ are resonant and are not suppressed.

The most relevant quantum-circuit case is simpler because the allowed frequencies are fixed.

\begin{corollary}[Pauli rotations]
Suppose $\omega_j\in\{-1,0,1\}$ and $0<c<\pi/2$.
For a mode of support $m$ with $\omega_i\ne0$,
\begin{equation}
|\rho_i(\bm\omega;c)|
=\sinc(c)(\cos c)^{m-1}.
\label{eq:pauli-filter}
\end{equation}
Hence every fixed nonzero $c$ produces exponential attenuation in the Fourier order $m$.
\end{corollary}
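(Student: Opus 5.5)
The plan is to substitute the allowed frequencies directly into the exact multiplier of Theorem~1, Eq.~\eqref{eq:rho}, and evaluate each factor separately. Let $\bm\omega$ have support of size $m$ with $\omega_i\ne0$, so $\omega_i=\pm1$. The set $\supp\bm\omega\setminus\{i\}$ then has exactly $m-1$ elements, each with $\omega_j=\pm1$. The remaining $L-m$ coordinates have $\omega_j=0$.

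First treat the two kinds of factor in the product. For $\omega_j=0$, $\cos(c\omega_j)=1$, so these coordinates drop out. For $\omega_j=\pm1$, evenness of cosine gives $\cos(c\omega_j)=\cos c$. The product over $j\ne i$ therefore equals $(\cos c)^{m-1}$. Likewise $\sinc$ is even, so $\sinc(c\omega_i)=\sinc(c)$. Hence
\begin{equation}
\rho_i(\bm\omega;c)=\sinc(c)(\cos c)^{m-1}.
\end{equation}

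Next I justify dropping the absolute value. For $0<c<\pi/2$ we have $\sin c>0$ and $\cos c>0$, so both $\sinc(c)$ and $(\cos c)^{m-1}$ are strictly positive, and $|\rho_i|$ equals the expression above, giving Eq.~\eqref{eq:pauli-filter}. Equivalently, since $d_\pi(\pm c)=c$, one could invoke the preceding corollary with $\delta=c$ and $S_i=\supp\bm\omega\setminus\{i\}$ to get the upper bound. Equality here is immediate from the direct computation, however, so the corollary is not needed.

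Finally, for the attenuation claim, note that $0<\cos c<1$ for any fixed $c\in(0,\pi/2)$. The response therefore decays as $\ee^{-(m-1)|\ln\cos c|}$, which is exponential in $m$; the prefactor $0<\sinc(c)<1$ does not depend on $m$. There is no real obstacle in this argument. The only point needing care is the restriction $c<\pi/2$: it guarantees $c$ is nonresonant and both factors are positive. At $c=\pi/2$ the cosine vanishes, and beyond it the factors' signs change, so the absolute value would matter.
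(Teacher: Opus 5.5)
Your proposal is correct and is essentially the paper's argument. The paper states this corollary without a separate proof, as a direct specialization of Eq.~\eqref{eq:rho}. You do exactly that substitution, use evenness of $\sinc$ and $\cos$, and note that $\sinc(c)>0$ and $0<\cos c<1$ on $(0,\pi/2)$, which gives both the equality and the exponential attenuation.
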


For this Pauli spectrum, $\rho_i$ is the same for every active coordinate of a mode. Thus the population drift is the gradient of a single filtered objective, $\SPSA F=\nabla F_c,$ and
\begin{equation}
F_c=\widehat F_{\bm0}+\sum_{\bm\omega\ne\bm0}
r_{|\supp\bm\omega|}\widehat F_{\bm\omega}\ee^{\ii\bm\omega\cdot\bm\theta},
\label{eq:pauli-potential}
\end{equation}
where $r_m=\sinc(c)(\cos c)^{m-1}$. Thus, for independently parametrized Pauli rotations, SPSA is exactly gradient descent on an explicitly filtered landscape; this special potential structure is absent for general generator spectra.

Equation~\eqref{eq:pauli-filter} distinguishes coordinate frequency from interaction order.
A mode such as $\cos(q\theta_1)$ oscillates rapidly in one coordinate but has support one, whereas $\cos(\theta_1+\cdots+\theta_m)$ has unit frequency in each coordinate but support $m$. Rademacher SPSA therefore filters these modes differently: the former is not attenuated by the transverse product, while the latter is exponentially suppressed by Eq.~\eqref{eq:pauli-filter}. The distinction is the number of coupled parameters, because it determines how many cosine factors act on the mode.

For small $c$, Eq.~\eqref{eq:rho} gives a complementary norm-dependent form,
\begin{equation}
\log|\rho_i|
=-\frac{c^2\omega_i^2}{6}
-\frac{c^2}{2}\sum_{j\ne i}\omega_j^2
+O\!\left(c^4\sum_j\omega_j^4\right).
\label{eq:small-c}
\end{equation}
The response changes by an order-one amount when $c^2\sum_{j\ne i}\omega_j^2=O(1)$. For comparable unit frequencies, this condition becomes
$c^2(m-1)=O(1)$, or $c=O(m^{-1/2})$.
Thus, at fixed $c$, the estimator acts as a support-selective
filter: low-support modes remain visible, whereas modes involving
many parameters are exponentially attenuated. This is the mechanism
by which a finite-width SPSA update can smooth a rugged circuit
landscape, although useful high-support information may be removed
at the same time.

There is also a typical-case statement that does not require one common $\delta$.
If the $m-1$ active transverse phases $X_j=c\omega_j\bmod\pi$ are independent and identically distributed with
$0<-\E\log|\cos X_j|=\kappa<\infty$, then the strong law of large numbers gives
$\log\left|
\prod_{j\ne i}\cos X_j
\right|/(m-1) \to -\kappa$
almost surely along increasing support; the product is over these active coordinates.
For phases uniform modulo $\pi$, $\kappa=\log2$.
The logarithmic decay rate is then $\log2$. This conclusion applies to a frequency ensemble whose phases have the stated distribution; it is not a claim that every fixed circuit has independent uniform phases.

\noindent{\it Finite-direction variance.---}
The filter in Eq.~\eqref{eq:rho} describes the expectation over perturbation directions.
A single SPSA estimate is not itself a smoothed gradient and may have large variance.
This distinction is essential for interpreting the mechanism.

\begin{proposition}[Single-mode variance]
For the complex unit-amplitude mode $e_{\bm\omega}$, with $\operatorname{Var}X=\E|X-\E X|^2$,
\begin{align}
\E|\widehat g_i|^2
&=\frac{1-\prod_j\cos(2c\omega_j)}{2c^2},
\label{eq:second-moment}\\
\operatorname{Var}(\widehat g_i)
&=\frac{1-\prod_j\cos(2c\omega_j)}{2c^2}
-\frac{\sin^2(c\omega_i)}{c^2}
\prod_{j\ne i}\cos^2(c\omega_j).
\label{eq:variance}
\end{align}
\end{proposition}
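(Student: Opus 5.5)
The plan is to reuse the single-mode computation from the proof of the Rademacher-SPSA multiplier theorem. Substituting $e_{\bm\omega}$ into Eq.~\eqref{eq:spsa-estimator} and using $\Delta_i^{-1}=\Delta_i$ gives
\begin{equation*}
\widehat g_i=\frac{\ii}{c}\,e_{\bm\omega}(\bm\theta)\,\Delta_i\sin(c\bm\omega\cdot\bm\Delta).
\end{equation*}
Since $\bm\omega$, $\bm\theta$, $c$ and $\bm\Delta$ are real, $|e_{\bm\omega}(\bm\theta)|=1$ and $\Delta_i^2=1$. The modulus squared is therefore
\begin{equation*}
|\widehat g_i|^2=\frac{\sin^2(c\bm\omega\cdot\bm\Delta)}{c^2}.
\end{equation*}
Notably, $\Delta_i$ drops out, which explains why Eq.~\eqref{eq:second-moment} is independent of $i$.

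Next I rewrite $\sin^2 x=(1-\cos 2x)/2$. Then I compute $\E_{\bm\Delta}\cos(2c\bm\omega\cdot\bm\Delta)$ as the real part of
\begin{equation*}
\E_{\bm\Delta}\ee^{\ii 2c\bm\omega\cdot\bm\Delta}=\prod_j \E_{\Delta_j}\ee^{\ii 2c\omega_j\Delta_j}=\prod_j\cos(2c\omega_j).
\end{equation*}
The factorization uses independence of the Rademacher entries, exactly as in the theorem's proof. The product is real, so this yields Eq.~\eqref{eq:second-moment} directly.

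For the variance, I use the complex identity $\E|X-\E X|^2=\E|X|^2-|\E X|^2$, valid with the stated definition. By the theorem, $\E\widehat g_i=\ii\omega_i\rho_i(\bm\omega;c)\,e_{\bm\omega}$, so
\begin{equation*}
|\E\widehat g_i|^2=\omega_i^2\rho_i^2=\frac{\sin^2(c\omega_i)}{c^2}\prod_{j\ne i}\cos^2(c\omega_j),
\end{equation*}
using $\omega_i\sinc(c\omega_i)=\sin(c\omega_i)/c$. When $\omega_i=0$ both sides vanish, consistent with $\sinc 0=1$. Subtracting this from the second moment gives Eq.~\eqref{eq:variance}.

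There is no real obstacle here. The only points needing care are keeping the complex-valued convention straight, so that the modulus absorbs the phase $e_{\bm\omega}$, and checking that the $\omega_i=0$ case is covered by the sinc convention.
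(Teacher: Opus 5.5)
Your proposal is correct and follows essentially the same route as the paper: you reduce to $|\widehat g_i|^2=c^{-2}\sin^2(c\bm\omega\cdot\bm\Delta)$, apply $\sin^2x=(1-\cos2x)/2$ with Rademacher independence, and then subtract the squared modulus of the mean from the multiplier theorem. Your explicit check that $\omega_i\sinc(c\omega_i)=\sin(c\omega_i)/c$, including at $\omega_i=0$, fills in a step the paper leaves implicit.
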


\begin{proof}
For a Fourier mode,
$|\widehat g_i|^2=c^{-2}\sin^2(c\bm\omega\cdot\bm\Delta)$.
Using $\sin^2x=(1-\cos2x)/2$ and Rademacher independence gives Eq.~\eqref{eq:second-moment}.
Subtracting the squared modulus of Eq.~\eqref{eq:multiplier-action} gives Eq.~\eqref{eq:variance}.
\end{proof}

\noindent{\it Gaussian random directions.---}
The product response is specific to the hypercube vertices sampled by Rademacher perturbations.
For comparison, let $\bm Z\sim\mathcal N(0,I)$ and consider the two-sided random-direction estimator
\begin{equation}
\widehat h_i=
\frac{F(\bm\theta+c\bm Z)-F(\bm\theta-c\bm Z)}{2c}Z_i.
\end{equation}
For a Fourier mode, the needed average follows by differentiating the Gaussian characteristic function,
\begin{equation}
\E e^{\ii c\bm\omega\cdot\bm Z}=e^{-c^2\|\bm\omega\|_2^2/2},    
\end{equation}
yields
\begin{equation}
\E\widehat h_i
=\sum_{\bm\omega}
\ii\omega_i\widehat F_{\bm\omega}
\ee^{\ii\bm\omega\cdot\bm\theta}
\ee^{-c^2\|\bm\omega\|_2^2/2}.
\label{eq:gaussian-filter}
\end{equation}
This is the gradient of the Gaussian-smoothed objective and is a radial low-pass filter \cite{NesterovSpokoiny2017,Gao2022}.

The two perturbation laws therefore retain different spectral information.
Gaussian smoothing treats all frequency vectors with the same Euclidean norm identically.
Rademacher SPSA distinguishes how that norm is distributed among coordinates and has exact zeros and revivals associated with the hypercube geometry.
For PQCs with Pauli rotations, the support and squared norm coincide, $\|\bm\omega\|_2^2=m$, and both responses decay exponentially in $m$ for small $c$.
Beyond that special case, the perturbation law becomes a genuine filter-design choice.  Isotropic Gaussian directions impose a radial cutoff determined only by $\|\bm\omega\|_2$, whereas Rademacher directions retain the coordinate structure of $\bm\omega$ and preferentially attenuate modes whose nonresonant frequency content is distributed over many parameters.  The latter is advantageous when useful descent information is concentrated within a few parameter groups while broadly supported modes mainly produce landscape roughness; Gaussian smoothing is more appropriate when the relevance of a mode is governed by its total spectral norm.  This choice can be refined anisotropically: replacing $c\Delta_j$ by $c_j\Delta_j$ changes the Rademacher response to $\sinc(c_i\omega_i)\prod_{j\ne i}\cos(c_j\omega_j)$, while a preconditioned Gaussian estimator with covariance $C$ produces the factor $\exp[-c^2\bm\omega^{\mathsf T}C\bm\omega/2]$.  The widths $c_j$ or the metric $C$ can therefore preserve an anticipated signal sector while suppressing nuisance sectors.

\noindent{\it Stochastic dynamics.---}
Let $\mathcal F_k$ denote the information available before iteration $k$. The fresh direction and measurement outcomes may be assumed conditionally independent of the past given $\bm\theta_k$. 
Writing $\bm m_c(\bm\theta_k)=\E[\widehat{\bm g}_k\mid\mathcal F_k]$ and $\bm\xi_k=\widehat{\bm g}_k-\bm m_{c_k}(\bm\theta_k)$ gives $\E[\bm\xi_k\mid\mathcal F_k]=0$ and $\E[\bm\xi_k\bm\xi_k^{\mathsf T}\mid\mathcal F_k]=\Sigma_c(\bm\theta_k)$. Thus $\bm\xi_k$ is the combined martingale fluctuation from the random direction and the measured costs.
Substitution into the usual descending update
$\bm\theta_{k+1}=\bm\theta_k-a_k\widehat{\bm g}_k$ gives
\begin{equation}
\bm\theta_{k+1}=\bm\theta_k-a_k\bigl[\bm m_{c_k}(\bm\theta_k)+\bm\xi_k\bigr],
\label{eq:spsa-update}
\end{equation}
Take $a_k=a\ll1$ and $c_k=c$ over the time window considered. At the grid time $t_N=Na$, summing Eq.~\eqref{eq:spsa-update} gives the exact decomposition
\begin{equation}
\bm\theta_N=\bm\theta_0-a\sum_{k=0}^{N-1}\bm m_c(\bm\theta_k)
-\sqrt a\,\bm M^a(t_N),
\label{eq:accumulated-update}
\end{equation}
where $\bm M^a(t)=\sqrt a\sum_{k<t/a}\bm\xi_k$.
Thus the term that becomes diffusion is already present in the discrete dynamics as the accumulated fluctuation $\sqrt a\,\bm M^a$. Assume that $\bm m_c$ and $\Sigma_c$ are locally Lipschitz, the iterates remain in a bounded region, and $\bm\xi_k$ has a uniform $(2+\epsilon)$ moment. Then the predictable quadratic variation is $\langle\bm M^a\rangle_t=a\sum_{k<t/a}\Sigma_c(\bm\theta_k)$, which converges to $\int_0^t\Sigma_c(\bm\theta(s))\,\mathrm ds$. The martingale functional central-limit theorem yields $\bm M^a\Rightarrow\int_0^t\Sigma_c^{1/2}\,\mathrm d\bm W_s$ \cite{KushnerYin2003}, while the deterministic sum in Eq.~\eqref{eq:accumulated-update} becomes $\int_0^t\bm m_c(\bm\theta(s))\,\mathrm ds$. Since $-\bm W_t$ has the same law as $\bm W_t$, the resulting weak approximation is
\begin{equation}
\mathrm d\bm\theta=-\bm m_c(\bm\theta)\,\mathrm dt
+\sqrt a\,\Sigma_c(\bm\theta)^{1/2}\,\mathrm d\bm W_t.
\label{eq:spsa-sde}
\end{equation}
The strict $a\to0$ limit is $\dot{\bm\theta}=-\bm m_c$; Eq.~\eqref{eq:spsa-sde} retains the leading $O(\sqrt a)$ fluctuation of the discrete recursion. Its density obeys
\begin{equation}
\partial_t p=\nabla\!\cdot(\bm m_c p)+\frac{a}{2}\partial_i\partial_j\bigl(\Sigma_{c,ij}p\bigr),
\label{eq:spsa-fp}
\end{equation}
The first term is the mean-update drift and the second is diffusion from direction and measurement fluctuations. If the mean field is conservative, define $F_c$ by $\bm m_c=\nabla F_c$; for Pauli rotations this is Eq.~\eqref{eq:pauli-potential}. For constant isotropic noise $\Sigma_c=\sigma_c^2I$, the reversible stationary solution is
$p_{\rm st}(\bm\theta)\propto\exp[-2F_c(\bm\theta)/(a\sigma_c^2)]$.
Thus $T_{\rm eff}=a\sigma_c^2/2$, while $c$ selects the smoothed potential. As $c_k\to0$, $F_c\to F$; sufficiently slow logarithmic cooling of $T_{\rm eff}$ gives concentration on global minima under the Geman--Geman assumptions \cite{GemanGeman1984}. Finite cooling can instead leave the process near a local minimum. For a nonconservative field, the stationary law follows the full drift--diffusion balance and generally carries probability currents. Stationary-density-preserving nonreversible currents have likewise been used to violate detailed balance and accelerate relaxation without changing the target distribution \cite{IchikiOhzeki2013,OhzekiIchiki2015,Ohzeki2026Gauge}. 
For one Fourier mode, this rotation is explicit. With
$m_i=\ii\omega_i\rho_i\widehat F_{\bm\omega}\ee^{\ii\bm\omega\cdot\bm\theta}$, direct differentiation gives
\begin{equation}
\partial_jm_i-\partial_im_j
=-\omega_i\omega_j\widehat F_{\bm\omega}e^{\ii\bm\omega\cdot\bm\theta}(\rho_i-\rho_j).
\label{eq:spsa-curl}
\end{equation}
Indeed, $\partial_jm_i=-\omega_i\omega_j\rho_i\widehat F_{\bm\omega}\ee^{\ii\bm\omega\cdot\bm\theta}$ and $\partial_im_j$ is the same expression with $\rho_i$ replaced by $\rho_j$, which proves Eq.~\eqref{eq:spsa-curl}. Expanding Eq.~\eqref{eq:rho} for small $c$ gives $\rho_i-\rho_j=c^2(\omega_i^2-\omega_j^2)/3+O(c^4)$.
Thus the rotation is $O(c^2)$ for general generator spectra. For Pauli rotations, every active coordinate has $\omega_i^2=\omega_j^2=1$, so $\rho_i=\rho_j$ exactly and Eq.~\eqref{eq:spsa-curl} vanishes mode by mode.

\noindent{\it Implications and limitations.---}
The standard SPSA schedule decreases the perturbation width, often as
$c_k=c_0/(k+1)^\gamma$ \cite{Spall1992,Spall1998}.
Equations~\eqref{eq:rho} and \eqref{eq:small-c} give this prescription a spectral-continuation interpretation.
At early iterations, finite $c_k$ suppresses high-support components of the population drift.
As $c_k$ decreases, progressively finer and more strongly coupled modes reenter, and $\rho_i\to1$ for every fixed $\bm\omega$.
Thus $c_k$ controls the spectral bias of the conditional mean, whereas $a_k$ controls the magnitude of both the deterministic update and the accumulated fluctuation in Eq.~\eqref{eq:spsa-sde}.  The exact result is that every Fourier contribution to the mean update is multiplied by $\rho_i(\bm\omega;c_k)$.  A finite width is beneficial only when the strongly attenuated modes are less useful for descent than the retained modes; otherwise it removes part of the optimization signal.

The mechanism must not be confused with a cure for barren plateaus.
A barren plateau suppresses the spectral power of the objective or its true gradient itself \cite{McClean2018,Cerezo2021,OkumuraOhzeki2026}.
SPSA retains its practical advantage that one stochastic gradient estimate uses only two cost evaluations, independently of parameter number.  However, all components are inferred from the same random-direction cost difference.  Equation~\eqref{eq:variance} shows that attenuation of its mean does not entail a comparable reduction of the single-direction variance, and measurement noise adds further fluctuations to that difference.  Consequently, if the true gradient is exponentially small, SPSA does not restore the missing signal: maintaining a fixed relative accuracy requires additional directions or shots in proportion to the variance divided by the squared mean.  Its spectral filtering can suppress selected contributions to the mean landscape, but it cannot create gradient spectral weight that is absent in a barren plateau.

\noindent{\it Conclusion.---}
We have derived the exact finite-width response of the Rademacher-SPSA gradient estimator on a multivariate Fourier basis. It is a product filter that suppresses nonresonant modes extending over many parameters; for Pauli-rotation circuits it becomes $\sinc(c)(\cos c)^{m-1}$. The effect is distinct from radial Gaussian smoothing and is useful only when the attenuated modes are predominantly rough rather than informative. The same multiplier also determines the finite-width drift and its possible nonconservative component in Eq.~\eqref{eq:spsa-curl}.

\begin{acknowledgments}
This work was supported by the Cabinet Office, Government of Japan, through the Cross-ministerial Strategic Innovation Promotion Program (SIP), and by the Ministry of Education, Culture, Sports, Science and Technology (MEXT) through the Quantum Leap Flagship Program (MEXT Q-LEAP).
\end{acknowledgments}

\bibliographystyle{jpsj}
\bibliography{references}

@article{Spall1992,
  author = {Spall, James C.},
  title = {Multivariate Stochastic Approximation Using a Simultaneous Perturbation Gradient Approximation},
  journal = {IEEE Transactions on Automatic Control},
  volume = {37},
  number = {3},
  pages = {332--341},
  year = {1992},
  doi = {10.1109/9.119632}
}

@article{Spall1998,
  author = {Spall, James C.},
  title = {An Overview of the Simultaneous Perturbation Method for Efficient Optimization},
  journal = {Johns Hopkins APL Technical Digest},
  volume = {19},
  number = {4},
  pages = {482--492},
  year = {1998},
  url = {https://www.jhuapl.edu/spsa/pdf-spsa/spall_an_overview.pdf}
}

@article{NesterovSpokoiny2017,
  author = {Nesterov, Yurii and Spokoiny, Vladimir},
  title = {Random Gradient-Free Minimization of Convex Functions},
  journal = {Foundations of Computational Mathematics},
  volume = {17},
  pages = {527--566},
  year = {2017},
  doi = {10.1007/s10208-015-9296-2}
}

@inproceedings{Gao2022,
  author = {Gao, Katelyn and Sener, Ozan},
  title = {Generalizing Gaussian Smoothing for Random Search},
  booktitle = {Proceedings of the 39th International Conference on Machine Learning},
  series = {Proceedings of Machine Learning Research},
  volume = {162},
  pages = {7077--7101},
  year = {2022},
  url = {https://proceedings.mlr.press/v162/gao22f.html}
}

@article{Schuld2021,
  author = {Schuld, Maria and Sweke, Ryan and Meyer, Johannes Jakob},
  title = {Effect of Data Encoding on the Expressive Power of Variational Quantum-Machine-Learning Models},
  journal = {Physical Review A},
  volume = {103},
  pages = {032430},
  year = {2021},
  doi = {10.1103/PhysRevA.103.032430}
}

@article{Wierichs2022,
  author = {Wierichs, David and Izaac, Josh and Wang, Cody and Lin, Cedric Yen-Yu},
  title = {General Parameter-Shift Rules for Quantum Gradients},
  journal = {Quantum},
  volume = {6},
  pages = {677},
  year = {2022},
  doi = {10.22331/q-2022-03-30-677}
}

@article{Nemkov2023,
  author = {Nemkov, Nikita A. and Kiktenko, Evgeniy O. and Fedorov, Aleksey K.},
  title = {Fourier Expansion in Variational Quantum Algorithms},
  journal = {Physical Review A},
  volume = {108},
  pages = {032406},
  year = {2023},
  doi = {10.1103/PhysRevA.108.032406}
}

@article{Stechly2023,
  author = {Stechly, Michal and Gao, Lanruo and Yogendran, Boniface and Fontana, Enrico and Rudolph, Manuel},
  title = {Connecting the Hamiltonian Structure to the {QAOA} Energy and Fourier Landscape Structure},
  journal = {arXiv:2305.13594},
  year = {2023},
  doi = {10.48550/arXiv.2305.13594}
}

@article{OkumuraOhzeki2026,
  author = {Okumura, Shun and Ohzeki, Masayuki},
  title = {Fourier Analysis of Parameterized Quantum Circuits and the Barren Plateau Problem},
  journal = {Journal of the Physical Society of Japan},
  volume = {95},
  pages = {033002},
  year = {2026},
  doi = {10.7566/JPSJ.95.033002}
}

@article{Ferrie2014,
  author = {Ferrie, Christopher},
  title = {Self-Guided Quantum Tomography},
  journal = {Physical Review Letters},
  volume = {113},
  pages = {190404},
  year = {2014},
  doi = {10.1103/PhysRevLett.113.190404}
}

@article{Kandala2017,
  author = {Kandala, Abhinav and Mezzacapo, Antonio and Temme, Kristan and Takita, Maika and Brink, Markus and Chow, Jerry M. and Gambetta, Jay M.},
  title = {Hardware-Efficient Variational Quantum Eigensolver for Small Molecules and Quantum Magnets},
  journal = {Nature},
  volume = {549},
  pages = {242--246},
  year = {2017},
  doi = {10.1038/nature23879}
}

@inproceedings{Lavrijsen2020,
  author = {Lavrijsen, Wim and Tudor, Ana and M{\"u}ller, Juliane and Iancu, Costin and de Jong, Wibe},
  title = {Classical Optimizers for Noisy Intermediate-Scale Quantum Devices},
  booktitle = {2020 IEEE International Conference on Quantum Computing and Engineering (QCE)},
  pages = {267--277},
  year = {2020},
  doi = {10.1109/QCE49297.2020.00041}
}

@article{Gacon2021,
  author = {Gacon, Julien and Zoufal, Christa and Carleo, Giuseppe and Woerner, Stefan},
  title = {Simultaneous Perturbation Stochastic Approximation of the Quantum Fisher Information},
  journal = {Quantum},
  volume = {5},
  pages = {567},
  year = {2021},
  doi = {10.22331/q-2021-10-20-567}
}

@article{BonetMonroig2023,
  author = {Bonet-Monroig, Xavier and Wang, Hao and Vermetten, Diederick and Senjean, Bruno and Moussa, Charles and B{\"a}ck, Thomas and Dunjko, Vedran and O'Brien, Thomas E.},
  title = {Performance Comparison of Optimization Methods on Variational Quantum Algorithms},
  journal = {Physical Review A},
  volume = {107},
  pages = {032407},
  year = {2023},
  doi = {10.1103/PhysRevA.107.032407}
}

@inproceedings{Periyasamy2024,
  author = {Periyasamy, Maniraman and Plinge, Axel and Mutschler, Christopher and Scherer, Daniel D. and Mauerer, Wolfgang},
  title = {Guided-{SPSA}: Simultaneous Perturbation Stochastic Approximation Assisted by the Parameter Shift Rule},
  booktitle = {2024 IEEE International Conference on Quantum Computing and Engineering (QCE)},
  pages = {1504--1515},
  year = {2024},
  doi = {10.1109/QCE60285.2024.00177}
}

@article{McClean2018,
  author = {McClean, Jarrod R. and Boixo, Sergio and Smelyanskiy, Vadim N. and Babbush, Ryan and Neven, Hartmut},
  title = {Barren Plateaus in Quantum Neural Network Training Landscapes},
  journal = {Nature Communications},
  volume = {9},
  pages = {4812},
  year = {2018},
  doi = {10.1038/s41467-018-07090-4}
}

@article{Cerezo2021,
  author = {Cerezo, M. and Sone, Akira and Volkoff, Tyler and Cincio, Lukasz and Coles, Patrick J.},
  title = {Cost Function Dependent Barren Plateaus in Shallow Parametrized Quantum Circuits},
  journal = {Nature Communications},
  volume = {12},
  pages = {1791},
  year = {2021},
  doi = {10.1038/s41467-021-21728-w}
}

@article{IchikiOhzeki2013,
  author = {Ichiki, Akihisa and Ohzeki, Masayuki},
  title = {Violation of detailed balance accelerates relaxation},
  journal = {Physical Review E},
  volume = {88},
  pages = {020101(R)},
  year = {2013},
  doi = {10.1103/PhysRevE.88.020101}
}

@article{OhzekiIchiki2015,
  author = {Ohzeki, Masayuki and Ichiki, Akihisa},
  title = {Langevin dynamics neglecting detailed balance condition},
  journal = {Physical Review E},
  volume = {92},
  pages = {012105},
  year = {2015},
  doi = {10.1103/PhysRevE.92.012105}
}

@article{Ohzeki2026Gauge,
  author = {Ohzeki, Masayuki},
  title = {Nonreversible Gauge Fields in {Fokker--Planck} Dynamics: Supersymmetric Hamiltonians and Learned Finite Forces},
  journal = {Physical Review Research},
  volume = {8},
  pages = {033170},
  year = {2026},
  doi = {10.1103/tg4c-7cnr}
}

@article{GemanGeman1984,
  author = {Geman, Stuart and Geman, Donald},
  title = {Stochastic relaxation, Gibbs distributions, and the Bayesian restoration of images},
  journal = {IEEE Transactions on Pattern Analysis and Machine Intelligence},
  volume = {6},
  pages = {721--741},
  year = {1984},
  doi = {10.1109/TPAMI.1984.4767596}
}

@book{KushnerYin2003,
  author = {Kushner, Harold J. and Yin, G. George},
  title = {Stochastic Approximation and Recursive Algorithms and Applications},
  edition = {2},
  publisher = {Springer},
  address = {New York},
  year = {2003},
  doi = {10.1007/b97441}
}

\end{document}